\newtheorem{theorem}{Theorem}[section]
\newtheorem{lemma}[theorem]{Lemma}
\pgfplotsset{compat=newest}
\pgfplotsset{
    tick label style={font=\small},
    label style={font=\small},
    legend style={font=\footnotesize},
    % every mark/.append style={solid},
    every axis/.append style={line width=1pt},
    % every axis plot/.append style={line width=1.5pt},
    cycle list name=exotic,
    % colormap name=viridis,
    % cycle list/RdGy-6,
    % cycle multiindex* list={
    %     RdGy-6
    %         \nextlist
    %     mark list*
    %         \nextlist
    %     linestyles*
    %         \nextlist
    %     very thick
    %         \nextlist
    % },
    cycle list/.define={my marks}{
        every mark/.append style={solid,fill=\pgfkeysvalueof{/pgfplots/mark list fill}},mark=*\\
        every mark/.append style={solid,fill=\pgfkeysvalueof{/pgfplots/mark list fill}},mark=square*\\
        every mark/.append style={solid,fill=\pgfkeysvalueof{/pgfplots/mark list fill}},mark=triangle*\\
        every mark/.append style={solid,fill=\pgfkeysvalueof{/pgfplots/mark list fill}},mark=diamond*\\
    },
    mark list fill={.!75!white},
    cycle multiindex* list={
        exotic
            \nextlist
        my marks
            \nextlist
        linestyles*
            \nextlist
        very thick
            \nextlist
    },
}
\begin{document}
%-------------------------------------------------------------------------------

%don't want date printed
\date{}

\title{\Large \bf Velos: One-sided Paxos for RDMA applications}

\author{
{\rm Rachid Guerraoui}\\
EPFL
\and
{\rm Antoine Murat}\\
EPFL
\and
{\rm Athanasios Xygkis}\\
EPFL
}

\maketitle

%%
%% The abstract is a short summary of the work to be presented in the
%% article.

\begin{abstract}
Modern data centers are becoming increasingly equipped with RDMA-capable NICs. These devices enable distributed systems to rely on algorithms designed for shared memory.
RDMA allows consensus to terminate within a few microsecond in failure-free scenarios, yet, RDMA-optimized algorithms still use expensive two-sided operations in case of failure.
In this work, we present a new leader-based algorithm for consensus based on Paxos that relies solely on one-sided RDMA verbs. Our algorithm decides in a single one-sided RDMA operation in the common case, and changes leader also in a single one-sided RDMA operation in case of failure.
We implement our algorithm in the form of an SMR system named Velos, and we evaluated our system against the state-of-the-art competitor Mu.
Compared to Mu, our solution adds a small overhead of $\approx 0.6 \mu s$ in failure-free executions and shines during failover periods during which it is 13 times faster in changing leader.
\end{abstract}
\section{Introduction}

RDMA is becoming increasingly popular in data centers. This networking technology is implemented in modern NICs and allows for Remote Direct Memory Access, where a server can access the memory of another without involving the CPU of the latter~\cite{rdma-manual}. As a result, RDMA paves the way for distributed algorithms in the shared-memory model that are no longer confined within a single physical server.

Communication over RDMA takes two forms. One form is one-sided communication which shares similar semantics to local memory accesses: a server directly performs READs and WRITEs to the memory of a remote server without involving the remote CPU. The other form is called two-sided communication that has similar semantics to message passing: the server SENDs a message to the remote server which involves its CPU to process the RECEIVEd message.

A significant body of work on consensus~\cite{cachin2011rachid} over RDMA has been conducted over the past decade~\cite{aguilera2020microsecond, poke2015dare, wang2017apus, derecho}. These solutions primarily focus on increasing the throughput and lowering the latency of common case executions, thus achieving consensus in the order of a few microseconds. Nevertheless, outside of their common case, these systems suffer from orders of magnitude higher failover times, ranging from 1ms\cite{aguilera2020microsecond} to tens or hundreds of ms~\cite{poke2015dare, wang2017apus}.

In this work we introduce a leader-based consensus algorithm based on the well-known Paxos~\cite{paxos} that is efficient both during the common-case as well as during failover. Our algorithm provides comparative performance to the state-of-the-art system Mu, i.e., it achieves consensus in under $1.9 \mu s$, making it only twice as slow as Mu. At the same time, our algorithm is 13 times faster than Mu in the event of a leader failure and manages to failover in under $65 \mu s$. To achieve this, our algorithm relies on one-sided RDMA WRITEs as well as Compare \& Swap (CAS), a capability that is present in RDMA NICs. 

The basic idea behind our algorithm is that the original Paxos algorithm contains RPCs that are simple enough to be replaced with CAS operations. The CAS operations are initiated by the leader and are executed on the memory of the participating consensus nodes (leader and followers). We first describe a version of our algorithm for single-shot consensus, where we provide proofs of correctness. Then, we continue by extending this version to a fully-fledged system, in which our algorithm takes a form similar to multishot Paxos\cite{lamport2001paxos}. In the case of a stable leader, it decides in a single CAS operation to a majority. In the event of a failure, the leader is changed in a single additional CAS operation to a majority.

The rest of this document is as follows: In section \ref{sec:related}, we present related work. In section \ref{sec:preliminaries}, we introduce the consensus problem and present the well-known Paxos algorithm. In section \ref{sec:cas-rpc}, we explain how to transform the RPC algorithm into a CAS-based one and we prove the correctness of this transformation. In section \ref{sec:practical-considerations}, we discuss further practical considerations, which are relevant in converting our single-shot consensus algorithm to a multi-shot one. In section \ref{sec:impl}, we discuss our implementation. In section \ref{sec:evaluation}, we evaluate the performance of our solution against Mu, the most recent state-of-the-art system that implements consensus.

\section{Related work} \label{sec:related}

\subsection*{Mu} Mu~\cite{aguilera2020microsecond} is an RDMA-powered State Machine Replication~\cite{boichat2003deconstructing} system that operates at the microsecond scale. 
Similarly to our system, it decides in a single amortized RDMA RTT.
This is achieved by relying on RDMA permissions~\cite{rdma-manual}. Mu ensures that at any time, at most one process can write to a majority and decide, which ensures consensus' safety. In case of leader failure, Mu requires permission changes that take $\approx 250$µs. Mu thus fails at guaranteeing microsecond decisions in case of failure.

\subsection*{APUS} APUS~\cite{wang2017apus} is a Paxos-based SMR system. It was tailored for RDMA. It doesn't use expensive permission changes but relies heavily on two-sided communication schemes. While it provides short failovers, its consensus engine involves heavy CPU usage at replicas and is significantly slower than Mu's.

\subsection*{Disk Paxos} Disk Paxos~\cite{gafni2003disk} observes that Paxos' acceptors can be replaced by moderately smart and failure-prone shared memories. Their work can be done by proposers as long as they are able to run atomic operations at the shared resource. This work is purely theoretical.
\section{Preliminaries} \label{sec:preliminaries}
In this section, we state the process and communication model that we assume for the rest of this work. Then we formally introduce the problem of consensus and we present the well-known Paxos algorithm. 

\subsection{Assumptions}\label{sec:communication}

We consider the message-and-memory (M\&M) model~\cite{aguilera2018passing}, which allows
processes to use both message-passing and shared-memory.
Communication is assumed to be lossless and provides FIFO semantics. 
The system has $n$ processes $\Pi= \{p_1,\ldots, p_{n}\}$ that can attain the roles of \textit{proposer} or \textit{acceptor}.
In the system, we assume that there are $p$ proposers and $n$ acceptors, where $1 < p < |\Pi|$. Processes can fail by crashing. Up to $p-1$ proposers and $\left \lfloor{\frac{n-1}2}\right \rfloor$ acceptors may fail.
As long as a process is alive, its memory is remotely accessible.
When a process crashes, its memory also crashes. In this case, subsequent memory operations do not return a response.
The system is asynchronous in that it can experience arbitrary delays.

\subsection{Consensus}\label{sec:consensus}

In the consensus problem, processes \textit{propose} individual values and eventually irrevocably \textit{decide} on one of them. Formally, Consensus has the following properties:
\begin{description}
	\item [Uniform agreement] If processes $i$ and $j$ decide $val$ and $val'$, respectively, then $val = val'$.
	\item [Validity]  If some process decides $val$, then $val$ is the input of some process.
    \item [Integrity] No process decides twice.
    \item [Termination] Every correct process that proposes eventually decides.
\end{description}

It is well known that consensus is impossible in the asynchronous model~\cite{fischer1985flp}. To circumvent this impossibility, an additional synchrony assumption has to be made. Our consensus algorithm provides safety in the asynchronous model and requires partial synchrony for liveness. For pedagogical reasons and in order to facilitate understanding, we implement our consensus algorithm by merging together the following abstractions:
\begin{itemize}
    \item \textit{Abortable Consensus}~\cite{cachin2011rachid}, an abstraction weaker than \textit{Consensus} that is solvable in the asynchronous model,
    \item \textit{Eventually Perfect Leader Election}~\cite{chandra1996weakest}, which relies on the weakest failure detector required to solve \textit{Consensus}.
\end{itemize}

%This section starts by giving the pseudocode for Abortable Consensus and Consensus, both of which rely on message passing for communication. Then, it shows how message passing can be replaced with CAS in Abortable Consensus, thus converting Abortable Consensus from message passing to shared memory.

%\subsection{Abortable Consensus in the message passing model}\label{sec:rpc-paxos}

\subsection{Abortable Consensus}\label{ss:abortable-consensus}

\emph{Abortable consensus} has the following properties:
\begin{description}
	\item [Uniform agreement] If processes $i$ and $j$ decide $val$ and $val'$, respectively, then $val = val'$.
	\item [Validity]  If some process decides $val$, then $val$ is the input of some process.
    \item [Termination] Every correct process that proposes eventually decides or abort.
    \item [Decision] If a single process proposes infinitely many time, it eventually decides.
\end{description}

Algorithm \ref{alg:rpc-abortable-consensus} solves Abortable Consensus and is based on Paxos. Processes are divided into two groups: proposers or acceptors. Proposers \textit{propose} a value for decision and acceptors \textit{accept} some proposed values. Once a value has been accepted by a majority of acceptors, it is decided by its proposer.

\begin{lstlisting}[columns=fullflexible,breaklines=true,keywords={while,if,else,return,do,for}, aboveskip=0pt, belowskip=0pt, float=ht!, caption={Abortable Consensus},label={alg:rpc-abortable-consensus}] 
@\textbf{Proposers execute:} \label{code:rpc-paxos-proposer}@
upon <Init>:
    decided = False
    proposal = id
    proposed_value = @$\bot$@
    
propose(value):
    proposed_value = value
    if not decided:
        if prepare():
            accept()
    
prepare():
    proposal = proposal + @$|\Pi|$@
    broadcast <Prepare | proposal>
    wait for a majority of <Prepared | ack, ap, av>
    if any av returned, replace proposed_value with av with highest ap
    if any not ack:
        trigger <Abort>
        return False
    return True
    
accept():
    broadcast <Accept | proposal, proposed_value>
    wait for a majority of <Accepted | mp>
    if any mp @$>$@ proposal:
        trigger <Abort>
    else:
        decided = true
        trigger <Decide | proposed_value>

@\textbf{Acceptors execute:} \label{code:rpc-paxos-acceptor}@
upon <Init>:
    min_proposal = 0
    accepted_proposal = 0
    accepted_value = @$\bot$@

upon <Prepare | proposal>:
    if proposal @$>$@ min_proposal:
        min_proposal = n
    reply <Prepared | min_proposal == n, accepted_proposal, accepted_value>

upon <Accept | proposal, value>:
    if proposal @$\geq$@ min_proposal:
        accepted_proposal = min_proposal = n
        accepted_value = value
    reply <Accepted | min_proposal>
\end{lstlisting}

Intuitively, the algorithm is split in two phases: the \textit{Propose} phase and the \textit{Accept} phase. During these phases, messages from the proposer are identified by a unique \textit{proposal number}.
The Prepare phase serves two purposes. First, the proposer gets a promise from a majority of acceptors that another proposer with a lower proposal number will fail to decide. Second, the proposer updates its proposed value using the accepted values stored in the acceptors. This way, if a value has been decided, the proposer will adopt it. The prepare phase can also trigger \texttt{Abort} if any acceptor in the majority previously made a promise to a higher proposal number.

If the proposer manages to complete the Prepare phase without aborting, it proceeds to the Accept phase. In this phase, the proposer tries to store its proposal value in a majority of acceptors. If it succeeds, it decides on that value. Otherwise, it ran obstructed and triggers \textit{Abort}.

A proof of correctness for algorithm \ref{alg:rpc-abortable-consensus} is given in \cite{cachin2011rachid}.

\subsection{From Abortable Consensus to Consensus}\label{sec:consensus-in-mp}

One can solve \textit{Consensus} by combining \textit{Abortable Consensus} together with \textit{Eventually Perfect Leader Election} ($\Omega$). In Abortable Consensus a proposer is guaranteed to decide, rather than abort, if it executes unobstructed. The role of $\Omega$ is to ensure this condition is ensured. Informally, $\Omega$ guarantees that eventually all correct proposers will consider a single one of them to be the leader. As long as a proposer is considering itself as the leader it keeps on proposing their value using \textit{Abortable Consensus}. Eventually, $\Omega$ will mark a single correct proposer as the leader, which will try to propose unobstructed and decide. The leader can then broadcast the decision to the rest of the proposers. Algorithm \ref{alg:rpc-consensus} provides the implementation of this idea. A proof of its correctness is given in \cite{cachin2011rachid}.

\begin{lstlisting}[columns=fullflexible,breaklines=true,keywords={while,if,else,return,do,for}, aboveskip=0pt, belowskip=0pt, float=ht! ,caption={Consensus from Abortable Consensus},label={alg:rpc-consensus}] 
@\textbf{Proposers execute:}@
upon <Init>:
    proposed_value = @$\bot$@
    leader = proposed = decided = False

upon <Trust | @$p_i$@>:
    if @$p_i$@ == self then leader = True
    else leader = False

propose(value):
    proposed_value = value
    while True:
        if leader and not proposed:
            proposed = True
            trigger <AbortableConsensus, Propose | proposed_value>
    
upon <AbortableConsensus, Decide | value>:
    broadcast <Decided | value>
    
upon <AbortableConsensus, Abort>:
    proposed = False
    
upon <Decided | value>:
    if not decided:
        decided = true
        trigger <Decide | value>
\end{lstlisting}

Notice that \textit{Abortable Consensus} differs from Consensus only in its liveness property. The former is essentially an obstruction free consensus implementation in which no proposer may decide under contention. However, Abortable Consensus retains the safety properties of Consensus. Thus, for the rest of this work we will concentrate on Abortable Consensus and we will transform it into a CAS-based algorithm.

% We observe that this algorithm can be further optimized by ear.

% The algorithm as stated has the following properties:
% \begin{itemize}
%     \item A proposer never changes its proposal number during a given iteration. However, the proposer is allowed to change its proposal value.
%     \item In line 33, proposal is always different from min\_proposal. This is because every proposer uses a different proposal number and runs prepare only once with it.
%     \item If an acceptor fails to update its min\_proposal for a particular proposer in line 34, then this acceptor will reject the value from this proposer during the Accept phase.
%     \item Every time an acceptor updates its accepted\_value, it also updates its accepted\_value.
% \end{itemize}
\section{One-sided Consensus} \label{sec:cas-rpc}

In this section, we explain how to transform the two-sided algorithm presented in section \ref{ss:abortable-consensus} into a one-sided one. To do so, we first establish the equivalence between the RPCs used in algorithm \ref{alg:rpc-abortable-consensus} and CAS. 
Then, we take advantage of this equivalence and replace RPCs with CAS in algorithm \ref{alg:rpc-abortable-consensus}.
The resulting CAS-based Abortable Consensus is given in algorithm \ref{alg:cas-abortable-consensus}.

\subsection{One-sided obstruction-free RPC} \label{ss:cas-rpc}
Observe that algorithm \ref{alg:rpc-abortable-consensus} uses message passing (i.e. RPC) in a very specific form. The acceptors keep track of only three variables: \texttt{min\_porposal}, \texttt{accepted\_proposal} and \texttt{accepted\_value}. In both the Accept and the Prepare phases, acceptors atomically update these values upon a very simple condition (i.e., a simple comparison) and return some of them. In this section, we propose and prove a simple obstruction-free transformation to turn such RPCs into purely one-sided conditional writes using CAS.

\begin{lstlisting}[columns=fullflexible,breaklines=true,keywords={while,if,else,return,do,for}, aboveskip=0pt, belowskip=0pt, float=ht!, caption={CAS-based RPC},label={alg:cas-based-rpc}] 
rpc(x): @\label{code:rpc-call}@
    if compare(x, state): @\label{code:rpc-compare}@
        state = f(state, x)
    return projection(state)

cas-rpc(x):
    expected = fetch_state() @\label{code:cas-fetch}@
    if not compare(x, expected): @\label{code:cas-compare}@
        return projection(expected) @\label{code:cas-ret-1}@
    
    move_to = f(expected, x)
    old = cas(state, expected, move_to) @\label{code:cas-cas}@
    if old == expected: @\label{code:cas-compare-cas}@
        return projection(move_to) @\label{code:cas-ret-2}@
        
    abort()
\end{lstlisting}

In algorithm \ref{alg:cas-based-rpc}, we assume that the whole state of a process (i.e., all its variables) is stored in \texttt{state}. In the case of RPC (line \ref{code:rpc-call}), the caller sends \texttt{x} to the callee. The callee deterministically compares \texttt{x} with its state using \texttt{compare}. If the comparison succeeds, its state is deterministically mutated using the function \texttt{f}. In any case, the callee extracts part of its state using \texttt{projection} and returns it to the caller. By convention, \texttt{rpc} runs atomically.

We prove below that if the callee's \texttt{state} is accessible by the caller via shared memory, and \texttt{compare}, \texttt{f}, \texttt{projection} are known to the caller, then \texttt{rpc} and \texttt{cas-rpc} are strictly equivalent except in the case where \texttt{cas-rpc} aborts.

\begin{lemma} \label{lemma:equivalence}
If \texttt{cas-rpc} does not abort, \texttt{rpc} and \texttt{cas-rpc} are equivalent.
\end{lemma}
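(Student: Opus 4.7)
The plan is to establish equivalence by exhibiting a linearization point for each non-aborting execution of \texttt{cas-rpc} and arguing that, at that point, the observable effects (return value and state transition) coincide with those of a single atomic \texttt{rpc} invocation. Since \texttt{rpc} is defined to run atomically, equivalence reduces to showing that whatever \texttt{cas-rpc} returns, and whatever it writes to \texttt{state}, could have been produced by some atomic \texttt{rpc(x)} placed at a well-chosen instant of the execution.

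First case: \texttt{compare(x, expected)} evaluates to \texttt{false} at line \ref{code:cas-compare}. In this branch \texttt{cas-rpc} returns \texttt{projection(expected)} at line \ref{code:cas-ret-1} without invoking CAS, so \texttt{state} is untouched. I would linearize this call at the \texttt{fetch\_state} of line \ref{code:cas-fetch}: the state observed at that instant is exactly \texttt{expected}, so an atomic \texttt{rpc(x)} placed there would evaluate its own \texttt{compare} to \texttt{false}, leave \texttt{state} unchanged, and return \texttt{projection(expected)}, which matches. Second case: \texttt{compare} succeeds, \texttt{move\_to = f(expected, x)} is computed, and the CAS at line \ref{code:cas-cas} succeeds, i.e.\ \texttt{old == expected}. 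Here I would linearize the call at the CAS itself. By the semantics of CAS, immediately before that atomic step \texttt{state} was still equal to \texttt{expected}, and immediately after it is \texttt{move\_to} $= f(\texttt{expected}, x)$. An atomic \texttt{rpc(x)} placed at that instant would observe \texttt{state = expected}, evaluate \texttt{compare(x, expected)} to the same value \texttt{true} (\texttt{compare} is deterministic in its arguments and \texttt{x} is unchanged), update \texttt{state} to $f(\texttt{expected}, x) = \texttt{move\_to}$, and return \texttt{projection(move\_to)}, which is exactly what \texttt{cas-rpc} returns at line \ref{code:cas-ret-2}.

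The main subtlety I expect to pin down is the window between \texttt{fetch\_state} and the CAS on the second branch: other processes may concurrently mutate \texttt{state} during that window, but the explicit non-abort hypothesis is what rules out any harmful interleaving. The success of the CAS at line \ref{code:cas-compare-cas} certifies that \texttt{state} was unchanged at the moment of the atomic swap, which is precisely what lets us collapse the logically two-step read-then-CAS into a single linearization point. A minor thing worth stating explicitly for the proof is that \texttt{compare}, \texttt{f} and \texttt{projection} are deterministic functions of their arguments (implicit in the assumption that they are ``known to the caller''), so evaluating them at the caller rather than at the callee produces the same results as in \texttt{rpc}, completing the equivalence.
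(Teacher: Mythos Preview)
Your proof is correct and follows essentially the same two-case analysis as the paper (branching on whether \texttt{compare} holds), verifying in each case that the state mutation and return value of \texttt{cas-rpc} coincide with those of \texttt{rpc} applied to the fetched state. Your explicit choice of linearization points (at \texttt{fetch\_state} for the failing branch, at the CAS for the succeeding one) is a slightly more careful concurrent framing of the same argument, but the structure and substance match the paper's proof.
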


\begin{proof}
An execution of \texttt{rpc} solely depends on the value of \texttt{state} and the input value \texttt{x}. We denote such execution of \texttt{rpc} with $\langle state, x \rangle_{rpc}$. If an execution of \texttt{cas-rpc} does not abort, it solely depends on the value of \texttt{expected} fetched at line \ref{code:cas-fetch} and the input value \texttt{x}. We denote such execution of \textit{cas-rpc} with $\langle expected, x \rangle_{cas-rpc}$.

We show that any execution $\langle s, x \rangle_{rpc}$ is equivalent to the execution $\langle s, x \rangle_{cas-rpc}$ in the sense that both $rpc$ and $cas-rpc$ will have the same value of \texttt{state} and return the same projection at the end of their execution.

If an execution $\langle s_1, x \rangle_{rpc}$ makes the comparison at line \ref{code:rpc-compare} fail, then \texttt{state} is not modified and \texttt{projection($s_1$)} is returned.
In the execution $\langle s_1, x \rangle_{cas-rpc}$, the comparison at line \ref{code:cas-compare} will also fail (as the comparison is deterministic) and \texttt{projection($s_1$)} is also returned without modifying the remote state.
In this case, both executions are equivalent.

If an execution $\langle s_2, x \rangle_{rpc}$ makes the comparison at line \ref{code:rpc-compare} succeed, then \texttt{state} is modified to \texttt{f($s_2$, x)} and \texttt{projection(f($s_2$, x))} is returned.
In the execution $\langle s_2, x \rangle_{cas-rpc}$, the comparison at line \ref{code:cas-compare} will also succeed (since the comparison is deterministic). As the run is assumed to not abort, the CAS will succeed. Thus the remote state will atomically be updated from $s_2$ to \texttt{f($s_2$, x)} and \texttt{f($s_2$, x)} is also returned.
So, in this case, both executions are also equivalent.
\end{proof}

In addition, this transformation is safe in case of obstruction

\begin{lemma} \label{lemma:nose}
If \texttt{cas-rpc} aborts, it has no side effect.
\end{lemma}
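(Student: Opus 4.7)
The plan is to trace through \texttt{cas-rpc} and identify the only operation that could produce a visible side effect on the remote \texttt{state}, then argue that this operation is a no-op on the abort path. The only write to the shared \texttt{state} in the procedure occurs at line \ref{code:cas-cas}, the CAS itself; the preceding \texttt{fetch\_state} at line \ref{code:cas-fetch} is a read, and \texttt{expected}, \texttt{move\_to}, and \texttt{old} are local variables of the caller.

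Next, I would pinpoint when the \texttt{abort()} branch is taken. The only path that reaches \texttt{abort()} is the one where the comparison at line \ref{code:cas-compare} succeeds (so execution proceeds past the early return), the CAS at line \ref{code:cas-cas} executes, and then the check at line \ref{code:cas-compare-cas} fails, meaning \texttt{old} $\neq$ \texttt{expected}. By the semantics of \texttt{compare\_and\_swap}, the swap to \texttt{move\_to} happens atomically only when the current value of \texttt{state} equals \texttt{expected}; when it does not, \texttt{state} is left untouched, and the CAS only reports the observed value in \texttt{old}.

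Combining these two observations, whenever \texttt{cas-rpc} reaches \texttt{abort()}, the unique shared-memory write in the procedure performed no modification, so the remote \texttt{state} is bitwise unchanged from the moment \texttt{cas-rpc} was invoked up to the abort, aside from any concurrent modifications by other callers (which are not side effects of this invocation). There is no obstacle of substance here; the argument is essentially a case analysis on the single write instruction, leveraging the standard semantics of hardware CAS already implicit in line \ref{code:cas-cas}. The only subtlety worth flagging explicitly is that ``no side effect'' must be read as ``no write caused by this call,'' since the value read at line \ref{code:cas-fetch} may well differ from the value read at line \ref{code:cas-cas} due to other callers operating concurrently; this is consistent with the way Lemma~\ref{lemma:equivalence} treats concurrency and with the obstruction-free guarantee claimed for the transformation.
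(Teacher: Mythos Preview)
Your proposal is correct and follows essentially the same argument as the paper: the only path to \texttt{abort()} is through a failed check at line~\ref{code:cas-compare-cas}, which by CAS semantics means the sole write to \texttt{state} was a no-op. Your version is more explicit (identifying the CAS as the unique shared write and clarifying that concurrent updates are not this call's side effects), but the core reasoning is identical.
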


\begin{proof}
If \texttt{cas-rpc} aborts, the comparison at line \ref{code:cas-compare-cas} failed. This in turn implies that the CAS at line failed and thus that \texttt{state} is unaffected by the execution.
\end{proof}

From lemmas \ref{lemma:equivalence} and \ref{lemma:nose}, \texttt{cas-rpc} exhibits all-or-nothing atomicity.
We now prove that such a transformation in obstruction-free.

\begin{lemma} \label{lemma:obstruction-freedom}
If \texttt{cas-rpc} runs alone (i.e., unobstructed), it does not abort.
\end{lemma}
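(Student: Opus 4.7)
The plan is to trace the only code path in \texttt{cas-rpc} that can reach \texttt{abort()} and show it is unreachable when no concurrent modification to \texttt{state} occurs. Inspection of algorithm \ref{alg:cas-based-rpc} reveals exactly one \texttt{abort()} site, guarded by the failure of \texttt{old == expected} at line \ref{code:cas-compare-cas}. In turn, \texttt{old} is the value returned by the CAS at line \ref{code:cas-cas}, so the guard fails precisely when the CAS did not find \texttt{state} equal to \texttt{expected}.

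I would split the argument into the two branches taken after the comparison at line \ref{code:cas-compare}. If that comparison fails, \texttt{cas-rpc} returns via line \ref{code:cas-ret-1} before ever issuing the CAS and therefore trivially does not abort. Otherwise, let \texttt{expected} denote the value read by \texttt{fetch\_state()} at line \ref{code:cas-fetch}. Being unobstructed means that no other process writes to \texttt{state} between this fetch and the subsequent CAS. By the atomicity of CAS, the operation then observes \texttt{state} equal to \texttt{expected}, succeeds in swapping in \texttt{move\_to}, and returns \texttt{expected}. Hence the guard at line \ref{code:cas-compare-cas} is satisfied and \texttt{cas-rpc} returns via line \ref{code:cas-ret-2}.

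The only mildly subtle point is pinning down the precise meaning of \emph{unobstructed}. The argument requires that no process write to \texttt{state} during the interval between the fetch and the CAS; this is weaker than requiring single-process execution, since concurrent reads by other callers are harmless, and it is the natural reading of obstruction-freedom adopted elsewhere in the paper. Beyond clarifying this assumption, the proof is one step and I do not anticipate any further difficulty.
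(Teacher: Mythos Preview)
Your proposal is correct and follows essentially the same reasoning as the paper: the only way to reach \texttt{abort()} is for the CAS at line~\ref{code:cas-cas} to find \texttt{state} different from the value fetched at line~\ref{code:cas-fetch}, which requires a concurrent writer and thus contradicts running unobstructed. The paper phrases this as a one-paragraph proof by contradiction rather than your direct case split, but the content is identical.
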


\begin{proof}
Let's assume by contradiction that \texttt{cas-rpc} runs alone and aborts. For \texttt{cas-rpc} to abort, the comparison at line \ref{code:cas-compare-cas} must fail. This in turn implies that the CAS at line \ref{code:cas-cas} failed, i.e., that the current value of \texttt{state} does not match the \texttt{expected} one. For this to happen the \texttt{state} must have been updated between line \ref{code:cas-fetch} and \ref{code:cas-cas} by another process. This means that there was a concurrent execution, a contradiction.
\end{proof}

% Essentially, the proposer executes an RPC. When the RPC returns, the proposer decides to use the value in the RPC or abort. The check for this is based solely on data contained in the RPC and the acceptor's state. This is equivalent to having the acceptor include the outcome of receiving the RPC (continue or abort) in the RPC itself. We introduce the notion of Abortable RPC, which is an RPC that upon reception causes the receiver to continue or abort. 
% We show below that an Abortable RPC can be simulated with CAS.

% It is therefore equivalent to assume that the acceptor dictates whether  that when returns even if the RPC returns, the proposer will abort the execution based only in the RPCs. This is equivalent for the RPC itself to return an 

% We propose a simple transformation to turn the RPCs into  remote 

% Our idea is to use the Compare and Swap operation to 

% The whole state lies in a single CAS-able register.

% CAS(register, expected, new\_value) returns old

% Atomically:
% * old = register
% * If register == expected, register = new\_value
% * In any case, return old

% fetch\_state()

\subsection{A purely one-sided consensus algorithm} \label{sec:simple-one-sided-consensus}

\begin{lstlisting}[columns=fullflexible,breaklines=true,keywords={while,if,else,return,do,for}, aboveskip=0pt, belowskip=0pt, float=ht!,caption={CAS-based Abortable Consensus},label={alg:cas-abortable-consensus}]
@\textbf{Proposers execute:} \label{code:cas-paxos-proposer}@
upon <Init>:
    decided = False
    proposal = id
    proposed_value = @$\bot$@
    
propose(value):
    proposed_value = value
    if not decided:
        if prepare():
            accept()
    
prepare():
    proposal = proposal + @$|\Pi|$@
    execute in parallel cas_prepare(p, proposal) for p in Acceptors
    wait for a majority to abort or return <ack, ap, av>
    if any returned, replace proposed_value with av with highest ap
    if any aborted or not ack:
        trigger <Abort>
        return False
    return True
    
accept():
    execute in parallel cas_accept(p, proposal, proposal_value) for p in Acceptors
    wait for a majority to abort or return mp
    if any aborted or returned mp @$>$@ proposal:
        trigger <Abort>
    else:
        decided = true
        trigger <Decide | proposed_value>

cas_prepare(p, proposal):
    expected = fetch_state(p)
    if not proposal @$>$@ expected.min_proposal:
        return <false, expected.accepted_proposal, expected.accepted_value>
    
    move_to = expected
    move_to.min_proposal = proposal
    old = cas(p.state, expected, move_to)
    if old == expected:
        return <true, expected.accepted_proposal, expected.accepted_value>
        
    abort()
    
cas_accept(p, proposal, value):
    expected = fetch_state(p)
    if not proposal @$\geq$@ expected.min_proposal:
        return expected.min_proposal
    
    move_to = expected
    move_to.min_proposal = proposal
    move_to.accepted_proposal = proposal
    move_to.accepted_value = value
    old = cas(p.state, expected, move_to)
    if old == expected:
        return expected.min_proposal
        
    abort()
    
@\textbf{Acceptors execute:} \label{code:cas-paxos-acceptor}@
upon <Init>:
    state = @$\{ min\_proposal: 0, accepted\_proposal: 0, accepted\_value: \bot\}$@
\end{lstlisting}

Algorithm \ref{alg:cas-abortable-consensus} implements Abortable Consensus by replacing RPCs in algorithm \ref{alg:rpc-abortable-consensus} with the one-sided obstruction-free RPCs introduced in \ref{ss:cas-rpc}. As this new algorithm relies on CAS for comparisons, it aborts in situations where the original algorithm would have succeeded. For example, consider the following execution: Let proposers $P_1$ and $P_2$ concurrently initiate the Prepare phase with respective proposals $1$ and $2$. Both fetch the remote state and get $\langle 0, 0, \bot \rangle$. Then, $P_1$ succeeds at writing its proposal to acceptor $A_1$. Later on, the CAS of $P_2$ fails at $A_1$ as the value is now $\langle 1, 0, \bot \rangle$ instead of the expected $\langle 0, 0, \bot \rangle$. Thus, $P_2$ aborts even if it had a larger proposal number than $P_1$. The more relaxed comparison in the original algorithm would not have caused $P_2$ to abort.

% We now prove that this transformed algorithm still implements Abortable Consenus. Given that the one-sided RPC differs from RPC only in the case where the former aborts, it suffices to show that if the transformed version aborts, so does the original version.

% If a proposer aborts at prepare phase

\begin{lemma}\label{lemma:ccdp}
Algorithm \ref{alg:cas-abortable-consensus} preserves the decision property of Abortable Consensus.
\end{lemma}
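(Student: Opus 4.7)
The plan is to reduce the behavior of Algorithm \ref{alg:cas-abortable-consensus} under the ``single active proposer'' hypothesis to the behavior of Algorithm \ref{alg:rpc-abortable-consensus}, for which the decision property has already been established in \cite{cachin2011rachid}. The three lemmas proven in Section \ref{ss:cas-rpc} give exactly the machinery I need: Lemma \ref{lemma:obstruction-freedom} to rule out aborts in the absence of contention, Lemma \ref{lemma:equivalence} to argue the executions of the two algorithms match step-for-step, and Lemma \ref{lemma:nose} as a safety net for any concurrent abort branches I have not explicitly excluded.

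First, I would observe that acceptors in Algorithm \ref{alg:cas-abortable-consensus} are purely passive: their \texttt{state} is only ever modified by a successful CAS issued by some proposer. Therefore, if a single proposer $P$ proposes infinitely many times and no other proposer is active, then no process other than $P$ ever writes to any acceptor's state. Within a single round of $P$, the invocations \texttt{cas\_prepare} and \texttt{cas\_accept} against each acceptor are issued sequentially by $P$, so between the \texttt{fetch\_state} (line \ref{code:cas-fetch}) and the subsequent \texttt{cas} (line \ref{code:cas-cas}) of any such invocation, no concurrent write can occur. Hence every one of $P$'s \texttt{cas-rpc} invocations runs unobstructed.

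Applying Lemma \ref{lemma:obstruction-freedom}, none of $P$'s \texttt{cas\_prepare} or \texttt{cas\_accept} calls abort, so $P$ never sees an aborted reply in either its prepare or its accept wait loop. By Lemma \ref{lemma:equivalence}, each such call is then indistinguishable from the corresponding RPC in Algorithm \ref{alg:rpc-abortable-consensus}: the acceptor's state transition and the tuple returned to $P$ are exactly those produced by the RPC version on the same inputs. Consequently, from $P$'s point of view, the entire execution is a valid execution of Algorithm \ref{alg:rpc-abortable-consensus} in which $P$ is the only proposer. Since Algorithm \ref{alg:rpc-abortable-consensus} satisfies the decision property \cite{cachin2011rachid}, $P$ eventually reaches a prepare phase that returns \texttt{True} followed by an accept phase whose majority returns no $mp > proposal$, at which point $P$ triggers \texttt{Decide}.

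The main obstacle, and the one I would want to argue carefully, is justifying that the ``single proposer'' assumption really does eliminate every source of concurrent modification at the acceptors; the subtle case is the gap between $P$'s prepare and accept phases across many retries, since $P$'s proposal number grows and the acceptor's \texttt{min\_proposal} gets rewritten by $P$ itself. This is benign because within each \texttt{cas-rpc} call the fetch/CAS window contains no interleaved writer, and across calls $P$'s monotonically increasing proposal number guarantees that the prepare comparison succeeds and the accept comparison succeeds on acceptors where $P$'s own prepare has landed, so the reduction to the RPC algorithm goes through without any additional case analysis.
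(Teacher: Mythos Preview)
Your proposal is correct and follows essentially the same route as the paper: invoke Lemma~\ref{lemma:obstruction-freedom} to rule out aborts once the lone proposer runs unobstructed, then invoke Lemma~\ref{lemma:equivalence} to reduce the execution to one of Algorithm~\ref{alg:rpc-abortable-consensus}, which is already known to satisfy the decision property. The only nuance is that the paper phrases the hypothesis as ``a single process proposes infinitely many times'' and accordingly argues that the proposer \emph{eventually} runs its one-sided RPCs obstruction-free, implicitly allowing other proposers to have been active finitely often before dropping out; your version assumes from the outset that no other proposer is active, which is a slightly stronger reading of the hypothesis but leads to the same conclusion by the same lemmas.
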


\begin{proof}
If a single process proposes infinitely many time, it will eventually run one-sided RPCs obstruction-free. By lemma \ref{lemma:obstruction-freedom}, this guarantees that eventually the one-sided RPCs will terminate without aborting. In such case, lemma \ref{lemma:equivalence} guarantees the execution to be equivalent to one of the original algorithm. Thus, the transformation preserves the decision property of the original algorithm.
\end{proof}

\begin{lemma}\label{lemma:cctp}
Algorithm \ref{alg:cas-abortable-consensus} preserves the termination property of Abortable Consensus.
\end{lemma}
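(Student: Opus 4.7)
The plan is to argue that every correct proposer that calls \texttt{propose} eventually reaches either a \texttt{Decide} or an \texttt{Abort} event, using only the facts already established: the structure of \texttt{cas\_prepare}/\texttt{cas\_accept}, the failure bound on acceptors, and the asynchrony assumption that live memory always responds.

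First, I would inspect the two one-sided RPC primitives. Each invocation of \texttt{cas\_prepare} (resp.\ \texttt{cas\_accept}) against an acceptor $p$ consists of a \texttt{fetch\_state}, a local comparison, at most one \texttt{cas}, and then a return or an \texttt{abort}. Since the model guarantees that a live acceptor's memory always answers remote operations, every such invocation against a correct acceptor terminates in a bounded number of steps with one of the three outcomes: a response value, or the local \texttt{abort()} path. In particular, the invocation itself cannot block forever; only an acceptor crash can prevent a response.

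Second, I would use the failure bound $\lfloor(n-1)/2\rfloor$ on acceptors. At any time a strict majority of acceptors is correct and, by the previous paragraph, each of them will either return a value or take the \texttt{abort()} path within bounded time from the moment the proposer issues its operation. Hence the statement \emph{wait for a majority to abort or return} in both \texttt{prepare()} and \texttt{accept()} unblocks after finitely many steps, regardless of the behaviour of other proposers. This is the main conceptual step, and also the one I expect to be the most delicate to state cleanly, because one has to be careful that the \texttt{abort()} outcome of a one-sided RPC counts towards the majority and does not leave the proposer waiting for more responses.

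Finally, I would trace the remaining control flow. Once \texttt{prepare()} has collected its majority, it either returns \texttt{True} (in which case \texttt{accept()} is invoked) or it first triggers \texttt{Abort} and returns \texttt{False}; symmetrically, once \texttt{accept()} has collected its majority, it either triggers \texttt{Decide} or \texttt{Abort}. Thus from the entry point of \texttt{propose}, a correct proposer executes at most two majority-waits, each of which terminates by the argument above, and then raises exactly one of \texttt{Decide} or \texttt{Abort}. This is precisely the termination property of Abortable Consensus, completing the proof.
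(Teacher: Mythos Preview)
Your proposal is correct and follows essentially the same approach as the paper: the paper's proof simply observes that, with a majority of correct acceptors, all CASes eventually return or abort, and that the absence of loops or blocking operations (other than the majority waits) ensures every invocation of \texttt{propose} ends in \texttt{Abort} or \texttt{Decide}. Your plan unpacks these two sentences into a more careful trace of the control flow, but the underlying argument is the same.
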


\begin{proof}
Assuming a majority of correct acceptors, all CASes will eventually return or abort. Due to the absence of loops or blocking operations (apart from waiting for a reply from a majority of acceptors), a proposer that invokes \texttt{propose} will either abort or decide.
\end{proof}

The only execution difference between both algorithms is that some executions of the transformed algorithm may abort, where the original one would not. Nevertheless, aborting does not violate safety.

\begin{lemma}\label{lemma:ccsp}
Algorithm \ref{alg:cas-abortable-consensus} preserves the safety properties of Abortable Consensus.
\end{lemma}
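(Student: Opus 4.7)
The plan is to reduce the safety of algorithm \ref{alg:cas-abortable-consensus} to that of algorithm \ref{alg:rpc-abortable-consensus}, whose uniform agreement and validity are already established in \cite{cachin2011rachid}. The vehicle is a simulation: given any execution $E$ of the CAS-based algorithm, I would construct an execution $E'$ of the RPC-based algorithm that exhibits the same \texttt{Decide} events with the same decided values, and then lift the safety properties of $E'$ back to $E$.

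First I would process the \texttt{cas\_prepare} and \texttt{cas\_accept} invocations of $E$ in the linearization order induced by the successful CAS operations at each acceptor. For every invocation that aborts I would drop it from the trace; this is sound by Lemma \ref{lemma:nose}, which tells us that an aborted \texttt{cas-rpc} leaves the acceptor's state untouched and, since the proposer never progresses past its majority wait on such a response, has no bearing on subsequent decisions. For every invocation that does not abort I would insert into $E'$ a corresponding \texttt{Prepare} or \texttt{Accept} RPC, justified by Lemma \ref{lemma:equivalence}, which guarantees that the state transition at the acceptor and the projection returned to the proposer coincide with those of the original RPC. On the proposer side, both \texttt{prepare} and \texttt{accept} in algorithm \ref{alg:cas-abortable-consensus} only progress to \texttt{Decide} when no participant in the awaited majority aborted; that majority maps one-to-one onto a majority of successful RPC replies in $E'$, so every \texttt{Decide} event in $E$ appears in $E'$ with the same value. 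Uniform agreement and validity for $E$ then follow immediately from the same properties of $E'$.

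The main obstacle I anticipate is being precise about interleavings: multiple proposers can concurrently issue \texttt{cas-rpc}s against the same acceptor, and the aborts are distributed non-deterministically across acceptors and phases. What keeps the argument manageable is that CAS serializes the successful updates at each acceptor, so their linearization order is well defined, and that aborted \texttt{cas-rpc}s genuinely disappear from the trace by Lemma \ref{lemma:nose}, leaving a sequence of acceptor state transitions indistinguishable from a legal trace of algorithm \ref{alg:rpc-abortable-consensus}. Validity then requires only the additional observation that values propagated through \texttt{accepted\_value} originate as proposer inputs in both algorithms, so no spurious value can be introduced by the transformation.
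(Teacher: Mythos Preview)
Your simulation argument is correct and reaches the same conclusion, but the paper's own proof takes a different, more abstract route. Rather than constructing an explicit execution $E'$ by mapping each non-aborting \texttt{cas-rpc} to an RPC via Lemma~\ref{lemma:equivalence} and erasing the aborting ones via Lemma~\ref{lemma:nose}, the paper observes in one stroke that a superfluous abort is externally indistinguishable from a proposer \emph{crash}: it replaces each extra abort at time $t_i$ by a crash of that proposer at $t_i$, notes that the resulting execution is a legal run of Algorithm~\ref{alg:rpc-abortable-consensus} (which tolerates arbitrarily many proposer crashes for safety), and derives a contradiction from any purported safety violation. Your approach buys a more explicit, operational correspondence between the two algorithms, at the cost of the interleaving bookkeeping you yourself flag; in particular, when a proposer's \texttt{cas-rpc}s succeed at some acceptors but abort at others, your $E'$ must still explain why that proposer does not progress in the RPC world, and the cleanest way to close that gap is precisely the paper's ``treat the abort as a crash'' move. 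The paper's argument sidesteps the linearization-order construction entirely and is shorter, though it leans more heavily on the reader accepting the indistinguishability claim at face value.
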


\begin{proof}
Assume by contradiction that adding superfluous abortions in the original algorithm can violate safety. In a first execution $E_1$, processes \{$P_1$, ..., $P_n$\} deviate from the original algorithm and abort at times \{$t_1$, ..., $t_n$\} after which the global state is \{$S_1$, ..., $S_n$\}. At some point, safety is violated. In a second execution $E_2$, processes \{$P_1$, ..., $P_n$\} crash at times \{$t_1$, ..., $t_n$\} after which the global state is \{$S_1$, ..., $S_n$\}. As the original algorithm tolerates arbitrarily many proposer crash failures, safety is not violated. Proposers cannot distinguish both executions. Thus, safety cannot be violated, hence a contradiction. Thus, adding superfluous aborts preserves safety and algorithm \ref{alg:cas-abortable-consensus} preserves safety.  
\end{proof}

\begin{theorem}
By lemmas \ref{lemma:cctp}, \ref{lemma:cctp}, \ref{lemma:ccsp}, algorithm \ref{alg:cas-abortable-consensus} implements Abortable Consensus. \qed
\end{theorem}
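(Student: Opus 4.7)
The plan is to observe that this theorem is essentially a bookkeeping step: it assembles the three preceding lemmas into the statement that every property of Abortable Consensus is preserved. So I would not attempt any new technical reasoning, but rather write a short proof that maps each of the four properties listed in section \ref{ss:abortable-consensus} (Uniform Agreement, Validity, Termination, Decision) to the lemma that establishes it.

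First I would enumerate the properties of Abortable Consensus and group them. Uniform Agreement and Validity are the two safety properties, and they are preserved by Lemma \ref{lemma:ccsp}, whose proof argues that any execution of the CAS-based algorithm in which a proposer spuriously aborts is indistinguishable (to any proposer) from an execution of the original algorithm in which the same proposer crashes at the same point, and that original algorithm already satisfies these safety properties under arbitrarily many proposer crashes. Termination is preserved by Lemma \ref{lemma:cctp}, which rests on the fact that every CAS returns or aborts in finite time and the proposer code contains no unbounded loop. Decision is preserved by Lemma \ref{lemma:ccdp}, which combines the obstruction freedom of Lemma \ref{lemma:obstruction-freedom} with the equivalence from Lemma \ref{lemma:equivalence}: a lone proposer eventually runs unobstructed, hence its CAS-based RPCs eventually do not abort and behave as the original RPCs.

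I would then conclude that since every property in the specification of Abortable Consensus is preserved by algorithm \ref{alg:cas-abortable-consensus}, and the original algorithm \ref{alg:rpc-abortable-consensus} was shown to solve Abortable Consensus in \cite{cachin2011rachid}, algorithm \ref{alg:cas-abortable-consensus} also solves it. There is no genuine obstacle here; the only minor care point is making sure the label \ref{lemma:cctp} that appears twice in the theorem statement is read as covering both Decision (\ref{lemma:ccdp}) and Termination (\ref{lemma:cctp}), i.e.\ flagging what looks like a typo rather than a missing argument.
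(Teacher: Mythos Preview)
Your proposal is correct and matches the paper's approach: the paper gives no proof beyond the theorem statement itself (the \qed\ is inline), treating the result as immediate from the three lemmas, and your write-up simply makes explicit the mapping from each Abortable Consensus property to its supporting lemma. You are also right that the repeated \texttt{lemma:cctp} is a typo for \texttt{lemma:ccdp}.
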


\subsection{Streamlined one-sided algorithm}

Section \ref{sec:simple-one-sided-consensus} introduced a simple one-sided consensus algorithm built by replacing RPCs with weaker (i.e., abortable) one-sided RPCs and proved its correctness. The resulting algorithm exhibits flagrant inefficiencies that can can be fixed to reduce the number of network operations to 2 CASes in the common case.

First, it is not required to fetch the remote state at the start of each RPC. As it is safe to have stale \texttt{expected} states, it is safe to use optimistic predicted states deduced from previous CASes. Predicted states can thus be initialized to $\langle 0, 0, \bot \rangle$ and updated each time a CAS completes (either succeeding or not). Moreover, wrongly predicting  states can only result in superfluous aborts which have been proven to be safe by lemma \ref{lemma:ccsp}. Thus, it is safe to optimistically assume that onflight CAS will succeed.

Second, in the Prepare phase, \texttt{proposal} can be increased upfront to be higher than any predicted remote \texttt{min\_proposal} to reduce predictable abortions.

Algorithm \ref{alg:streamlined-cas-abortable-consensus} gives the algorithm obtained after applying these optimisations.

\begin{lstlisting}[columns=fullflexible,breaklines=true,keywords={while,if,else,return,do,for}, aboveskip=0pt, belowskip=0pt, float=ht! ,caption={Streamlined One-sided Abortable Consensus},label={alg:streamlined-cas-abortable-consensus}]
@\textbf{Proposers execute:} \label{code:sld-cas-paxos-proposer}@
upon <Init>:
    predicted = @$\{ min\_proposal: 0, accepted\_proposal: 0, accepted\_value: \bot\}^{|Acceptors|}$@
    decided = False
    proposal = id
    proposed_value = @$\bot$@
    
propose(value):
    proposed_value = value
    if not decided:
        if prepare():
            accept()
    
prepare():
    for p in Acceptors:
        while predicted[p].min_proposal @$\geq$@ proposal:
            proposal = proposal + @$|\Pi$|@
    
    reads = @$\bot^{|Acceptors|}$@
    move_to = @$\bot^{|Acceptors|}$@
    
    for p in Acceptors:
        move_to[p] = (proposal, predicted[p].accepted_proposal, predicted[p].accepted_value)
        async reads[p] = cas(slot@$_p$@, predicted[p], move_to[p])
    wait until majority of slots are read
    
    for p in Acceptors:
        if reads[p] @$\in$@ {predicted[p], @$\bot$@}:
            predicted[p] = move_to[p] @\label{code:optimistic-update:1}@
        else:
            predicted[p] = reads[p]
            
    if any CAS failed (predicted[p] @$\neq$@ read[p]):
        trigger <Abort> @\label{code:sld-abort:1}@
        return false

    proposed_value = predicted[.].accepted_value with highest accepted_proposal or proposed_value if none
    return true
    
accept():
    move_to = (proposal, proposal, proposed_value)
    for p in Acceptors:
        async reads[p] = cas(slot@$_p$@, predicted[p], move_to)
    wait until majority of slots are read
    
    if any CAS failed:
        for p in Acceptors:
            if reads[p] @$\in$@ {predicted[p], @$\bot$@}:
                predicted[p] = move_to @\label{code:optimistic-update:2}@
            else:
                predicted[p] = reads[p]
        trigger <Abort> @\label{code:sld-abort:2}@
        return

    decided = true
    trigger <Decide | proposed_value>
    
@\textbf{Acceptors execute:} \label{code:sld-cas-paxos-acceptor}@
upon <Init>:
    state = @$\{ min\_proposal: 0, accepted\_proposal: 0, accepted\_value: \bot\}$@
\end{lstlisting}

Said optimisations also preserve liveness. Let's assume that a single proposer runs infinitely many time. Eventually, it will run obstruction-free. In the worst case, it will each time abort at line \ref{code:sld-abort:1} or \ref{code:sld-abort:2} because of a single wrongly guessed remote state and update its prediction. The optimistic update of expected states at lines \ref{code:optimistic-update:1} and \ref{code:optimistic-update:2} and the FIFO property of links provide that, once a remote state is correctly guessed, any later CAS will succeed. Thus, after at most $n$ runs all CASes will succeed and the proposer will decide.
\section{Practical Considerations} \label{sec:practical-considerations}

So far we have showed a new algorithm for doing RDMA-based consensus using CAS. Our algorithm presents a single instance of consensus, however most practical systems require to run consensus over and over.

State Machine Replication (SMR) replicates a service (e.g., a key-value storage system)
across multiple physical servers called \emph{replicas}, such that the system remains available and consistent even if some servers fail.
SMR provides strong consistency in the form of linearizability~\cite{HW90}.
A common way to implement SMR, which we adopt in this paper, is as follows: each replica
  has a copy of the service software and a log. 
The log stores client requests.
We consider non-durable SMR systems~\cite{ramcloud,li2016just,curp,jin2018netchain,istvan2016consensus,ipipe},
  which keep state in memory only, without logging updates to stable storage. 
Such systems make an item of data reliable by keeping copies of it in the memory of several nodes. Thus, the data remains recoverable as long as there are fewer simultaneous node failures than data copies~\cite{poke2015dare}.

A consensus protocol ensures that all replicas agree on what request is stored in each 
  slot of the log.
Replicas then apply the requests in the log (i.e., execute the 
corresponding operations), in log order.
Assuming that the service is deterministic, this ensures all replicas remain
  in sync.
We adopt a leader-based approach, in which a dynamically elected replica 
  called the \emph{leader} communicates with the clients and sends back responses
  after requests reach a majority of replicas.
As already stated, we assume a \textit{crash-failure} model: servers may fail by crashing, after which they
  stop executing.
  
Velos is the implementaion of Algorithm \ref{alg:streamlined-cas-abortable-consensus} in the form of SMR. Implementing Velos leads to practical considerations and additional hardware-specific optimizations that are not present in Algorithm \ref{alg:streamlined-cas-abortable-consensus}.

\subsection{Pre-preparation}

Unfortunately, our CAS approach prevent us from using the multi-Paxos optimisation\cite{pml2007, lamport2001paxos, mazieres2007paxos}. Thus, each consensus slot must be prepared individually. Nevertheless, a leader can prepare slots in advance so that the it can decide running only the accept phase on its critical path. In this case, the leader decides in a single CAS RTT. Doing so requires a stable leader. Switching to another leader requires re-preparing slots. Fortunately, this will most likely succeed in a single CAS RTT by optimistically predicting remote slots to have been prepared by the failed leader.

\subsection{Limited CAS size}

Algorithm \ref{alg:streamlined-cas-abortable-consensus} assumes that a state can fit within a single CAS. Current RDMA NICs support CAS up to 8 bytes. As both \texttt{min\_proposal} and \texttt{accepted\_proposal} must be the same size, both fields are limited to at most 31 bits with 2 bits being dedicated to storing the \texttt{accepted\_value}.

One may be afraid of proposal fields overflowing (either breaking safety or decision). In such an extremely unlikely case (and actually more for completeness), the abstraction can fallback to traditional RPC. Switching to RPC can be safely implemented as follows: Once the RDMA-exposed  \texttt{min\_proposal} of an acceptor reaches $2^{31} - |\Pi|$, every proposer switches to RPC to communicate with this specific acceptor. At a regular time interval, acceptors checks their \texttt{state} and, if above the threshold, execute algorithm \ref{alg:rpc-abortable-consensus} with \texttt{min\_proposal}, \texttt{accepted\_proposal} and \texttt{accepted\_value} variables initiated to match \texttt{state}.

Another concern is the limited \texttt{accepted\_value} field size. Depending on the upper application, the value can be inlined within the CAS. Otherwise, a simple indirection can be put in place. Instead of deciding on the proposed value itself, one can decide on the proposer's id. This can be achieved by RDMA-writing the actual value to a majority of acceptors in a dedicated write-exclusive slot before running the accept phase. RDMA RC QP FIFO semantics can be leverages to do so at minimimum cost. The write WQE can be prepended to the Accept WQE, made unsignaled, and posted with Doorbell batching. If the CAS completes, then the value was written at a majority of acceptors and will always be recoverable. 

\subsection{Device Memory}

Modern RNICs allows RDMA exposure of their own internal memory. This feature is called \textit{Device Memory} (DM). In Mellanox' Connect-X6, the available memory is slightly larger than 100KiB. RDMAing this memory is faster than accessing the main memory as it removes an extra PCIe communication from the critical path. All RDMA verbs benefit from a speedup. Moreover, DM reduces atomic verbs contention. As one-sided Paxos makes acceptors fundamentally passive, DM can be used without additional cost. DM can also be leveraged when RDMA-writing the actual value as described in the previous section to PCIe-transfer the payload only once.

\subsection{Piggybacking decisions}

Consensus slots can be augmented with an additional \texttt{previous\_decision} value. This way, if every node endorses both the proposer and the acceptor roles, it can learn that a value has been decided for slot $s-1$ simply by reading its local slot $s$.
With this strategy, values are decided in a single CAS and decisions learned with no additional communication in the common case. 

\subsection{Unavailable features}

Modern RNICs lack some features and performances that would make one-sided Paxos even more appealing. We believe that these features can reasonably be expected to be provided by future RNICs.

First are global CASes. CX6 atomics are only guaranteed to be atomic with other operations executed by the same HCA. This prevent us from using CPU-side CASes to update the \texttt{state} of the proposer, which could save one MMIO ($\approx$300ns). 

Second are unalligned CASes. Currently, CASes cannot overlap 2 consecutive slots. Such a feature would allow a proposer to run the Accept phase for slot $s$ and the Prepare phase for slot $s+1$ in a single CAS, definitely removing the need for batch pre-preparation from the critical path.

Third, even using Device Memory, CAS exhibits a latency twice higher than RDMA WRITEs, which makes Mu twice as fast in the failure-free scenario.
\section{Implementation}\label{sec:impl} \label{sec:opt}

Our algorithm is implemented in 1046 lines of C++17 code (CLOC~\cite{tool-cloc}).
It uses the \emph{ibverbs} library for RDMA over Infiniband and it relies on the reusable abstractions provided by the source code of Mu. Furthermore, we implemented the optimizations mentioned in section~\ref{sec:practical-considerations}, apart from ``Piggybacking decisions''.

We implement a new leader election algorithm that departs from Mu's design. Our algorithm relies on modifications of the Linux kernel and is composed of two separate modules, namely the \textit{interceptor} and the \textit{broadcaster} module. In Linux, when a process crashes control is transferred to the kernel which takes care of cleaning up the process. Our kernel modifications allow processes to instruct the kernel---via a \texttt{prctl} system call---that upon failure the interceptor module should be invoked during the cleaning up. The interceptor module is then responsible for invoking the broadcaster module. The broadcasting module broadcasts a message saying that the process has crashed. This message is picked up by correct processes that subsequently stop trying to communicate with the crashed process. Our kernel modifications and kernel modules implementation span in 352 lines of C code.

% Moreover, we implement some standard RDMA optimizations to reduce latency.
% RDMA WRITEs with payloads below a device-specific limit (256 bytes in our setup) are inlined: their payload is written directly to their work request. 
% RDMA WRITEs that follow RDMA CASes are batched together and sent using a single doorbell.
% We pin threads to cores in the NUMA node of the NIC. 

% Our implementation handles common practical problems in RDMA-based distributed computing. It gracefully handles RDMA connection failures.

\section{Evaluation} \label{sec:evaluation}
Our goal is to evaluate whether our implementation can achieve consensus within a few microseconds and whether it handles failures with the least amount of delay. Concretely, with our evaluation we aim at answering the following questions:
\begin{itemize}
    \item Does our implementation imposes a small overhead compared to Mu in the common case?
    %\item How fast does our implementation detect failures?
    \item What is the total failover time during a leader crash?
\end{itemize}

We evaluate our system on a 3-node cluster, the details of which are given in Table~\ref{tab:hwspecs}.

In the reported numbers we show 3-way replication, which accounts
for most real deployments~\cite{hunt2010zookeeper}. 
In all of our experiments we ignore the existence of a client issuing requests to our system.

{\footnotesize
\begin{table}[ht!]
%\begin{center}
    \centering
    \caption{Hardware details of machines.}
	\begin{tabular}{cm{0.64\linewidth}}
	\toprule
\textbf{CPU}		&   2x Intel(R) Xeon(R) Gold 6244 CPU @ 3.60GHz \\
\textbf{Memory}	&   2x 128GiB \\
\textbf{NIC}		&   Mellanox ConnectX-6 \\
\textbf{Links}   &   100 Gbps Infiniband \\
\textbf{Switch}  &   Mellanox MSB7700 EDR 100 Gbps  \\
\textbf{OS}      &   Ubuntu 20.04.2 LTS \\
\textbf{Kernel}  &   \texttt{5.4.0-74-custom} \\
\textbf{RDMA Driver} & Mellanox OFED \texttt{5.3-1.0.0.1} \\
	    \bottomrule
	\end{tabular}
    \label{tab:hwspecs}
\end{table}
}

We measure time using the POSIX \texttt{clock\_gettime} function, with the \texttt{CLOCK\_MONOTONIC} parameter. 
In our deployment, the resolution and overhead of \texttt{clock\_gettime} is around $16$--$20ns$~\cite{uarch-bench}.

\subsection{Common-case Replication Latency}
We first evaluate the latency of our system under no leader failure. We measure the latency at the leader. Latency refers to the time it takes for \texttt{propose} of Algorithm~\ref{alg:streamlined-cas-abortable-consensus} to execute. We show the time it takes to replicate messages of different sizes in Figure~\ref{fig:latency}.

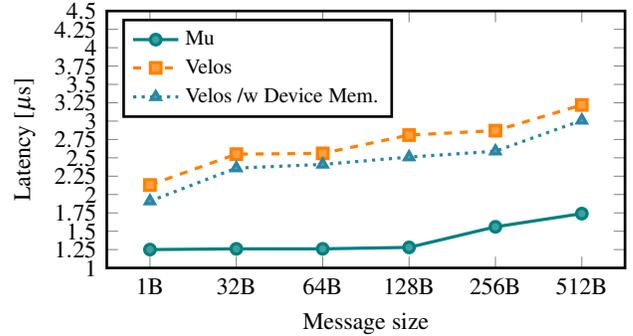
\begin{figure}
\centering
\begin{tikzpicture}
	
    \begin{axis}[
        height=5cm,
        width=\linewidth,
        ytick scale label code/.code={},
        xlabel={Message size},
        ylabel={Latency [$\mu$s]},
        ymin=1,
        ymax=4.5,
        ytick distance=0.25,
        xtick=data,
        xticklabels from table={data/propose_latency.txt}{size},
        xmode=normal,
        ymode=normal,
        legend pos=north west,
        legend style={
            cells={anchor=west}}
    ]
        
    \addplot table[x=x-pos,y=mu] {data/propose_latency.txt};
    \addlegendentry{Mu};
    
    \addplot table[x=x-pos,y=velos] {data/propose_latency.txt};
    \addlegendentry{Velos};

    \addplot table[x=x-pos,y=velosdm] {data/propose_latency.txt};
    \addlegendentry{Velos /w Device Mem.};

    \end{axis}
	
\end{tikzpicture}
\\

\caption{Median replication latency of Velos compared to Mu for different message sizes.}
\label{fig:latency}
\end{figure}

For messages of 1 byte, Velos replicates a request using only a RDMA CAS. On the other side, Mu always uses a single RDMA WRITE to replicate. For messages up to 128 bytes Mu manages to inline the message in the RDMA request, thus exhibiting almost constant replication latency. RDMA WRITE and RDMA CAS are both one-sided operations, but they have different latency. Sending 3 RDMA CASes and waiting for a majority of replies costs around 1.9$\mu s$, while the same communication pattern with RDMA WRITEs costs 1.25$\mu s$. This is seen from the time difference for 1B messages.

For larger payload sizes, Velos replicates a request using a RDMA CAS and an additional RDMA WRITE. In other words, Velos does exactly what Mu does, apart from having an extra RDMA CAS for every replicated message. Given that the latency of RDMA CAS is constant (in the absence of CAS-contention, i.e., when having a stable leader), the impact of the overhead in the replication latency of Velos compared to Mu diminishes for larger message sizes.

Figure~\ref{fig:latency} also demonstrates the effect of using Device Memory. When Velos relies exclusively on Device memory for its RDMA WRITEs and RDMA CASes, it gains 200$ns$ on replication latency.

\subsection{Fail-over Time}
During a leader crash, Velos replicas receive the failure detection event and a new leader is elected. The new leader immediately starts replicating new messages among itself and the remaining replica $R$.

Figure~\ref{fig:tput-leader-failure} evaluates the time it takes for $R$ to discover a new replicated value in its log. When a stable leader replicates requests, the throughput is at around 42 requests per 100$\mu s$. In other words, $R$ discovers a new entry in its log approximately every 2.5$\mu s$. When the leader fails the throughput drops to 0 and replica $R$ discovers a new value after approximately 65$\mu s$. The subsequent few replication requests from the new leader take between 3$\mu s$ to 3.6$\mu s$, which is evident in the throughput curve. When the new leader takes over, the throughput curve exhibits a not-so-steep trajectory, before stabilizing again at the throughput of 42 requests per 100$\mu s$ This is because the new leader's cache is cold and initial replication requests result in higher replication latency. Soon after the new leader manages to replicate new requests in approximately 2.5$\mu s$.

Comparing Velos to Mu, the latter faster than Velos in the common case but slower during leader change.
Mu manages to replicate requests using a single WRITE but it relies on permissions to handle concurrent leaders during leader failure. The cost of changing permissions, as presented in Mu, is approximately 250$\mu s$ just for changing the leader. Mu requires an additional 600$\mu s$ to detect the leader failure. On the other side, Velos requires approximately 30$\mu s$ to detect a leader failure and an additional 35$\mu s$ for the new leader to successfully replicate the next request. In other words, Velos is approximately 1.2 or 1.5 $\mu s$ slower than Mu in the common case, depending on whether it relies or not in Device Memory. Velos is significantly faster than Mu during leader change, as it detects a leader failure 20 times faster than Mu and replicates the next request 7 times faster than Mu. Overall, Velos is 13 times faster than Mu during leader change.
\begin{figure}
\centering
\begin{tikzpicture}
	
    \begin{axis}[
        height=5cm,
        width=\linewidth,
        ytick scale label code/.code={},
        xlabel={Time ($\mu s$)},
        ylabel={Throughput [requests per 100 $\mu$s]},
        % ymin=1,
        % ymax=3.5,
        % ytick distance=0.25,
        % xtick=data,
        % xticklabels from table={data/propose_latency.txt}{size},
        xmode=normal,
        ymode=normal,
        legend pos=north west,
        legend style={
            cells={anchor=west}}
    ]
        
    \addplot [mark=none] table[x=idx, y=tput] {data/throughput_leader_failure.txt};

    \end{axis}
	
\end{tikzpicture}
\\

\caption{Throughput under leader failure (for message size is 1 byte) measured from the remaining replica. }
\label{fig:tput-leader-failure}
\end{figure}
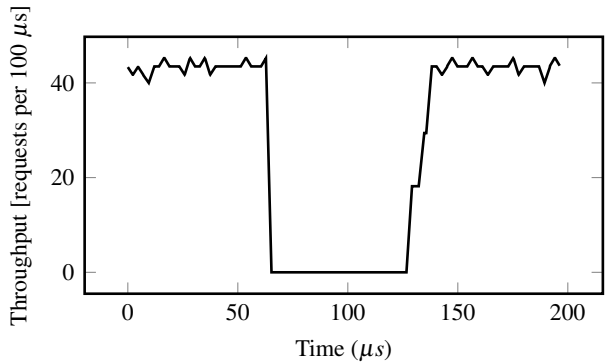

% \begin{itemize}
%     \item Throughput x Latency
%     \item Inlined vs Non-inlined
%     \item C\&S contention
%     \item Contention
%     \item Failover time
% \end{itemize}

\section{Conclusion} \label{sec:conclusion}
Consensus is a classical distributed systems abstractions that is widely used in the data centers. RDMA enables consensus to achieve lower decision latency not only due to its intrinsic latency characteristics as a network fabric, but also due to its semantics. RDMA semantics such as permission changes improve the latency of consensus in the common case, as they enable consensus to decide by using a single RDMA WRITE. However, the non-common case during which failures still exhibited latency in the order of a millisecond.

Velos is a state machine replication system that can replicates requests in a few microseconds. It relies a modified Paxos algorithm that replaces RPCs (i.e. message passing) with RDMA Compare-and-Swap. As a result, Velos exhibits common-case latency that is competitive to Mu's latency and shines during failure. Velos manages to switch to a new leader and start replicating new requests in under 65 $\mu s$, meaning that Velos has an extra 9 of availability for the same number of expected failures, compared to Mu.

% \section{Acknowledgments}

%-------------------------------------------------------------------------------
\bibliographystyle{plain}
\bibliography{references}

%%%%%%%%%%%%%%%%%%%%%%%%%%%%%%%%%%%%%%%%%%%%%%%%%%%%%%%%%%%%%%%%%%%%%%%%%%%%%%%%
\end{document}